\newcommand{\suchthat}{\;\ifnum\currentgrouptype=16 \middle\fi|\;}
\newtheorem{theorem}{Theorem}
\newtheorem{corr}{Corollary}
\begin{document}

\title{On the Level Crossing Rate\\of  Fluid Antenna Systems}

\author{\IEEEauthorblockN{Priyadarshi Mukherjee, Constantinos Psomas, and~Ioannis Krikidis}

\IEEEauthorblockA{Department of Electrical and Computer Engineering, University of Cyprus\\
Email: \{mukherjee.priyadarshi, psomas, krikidis\}@ucy.ac.cy}}

\maketitle
\begin{abstract}
Multiple-input multiple-output (MIMO) technology has significantly impacted wireless communication, by providing extraordinary performance gains. However, a minimum inter-antenna space constraint in MIMO systems does not allow its integration in devices with limited space. In this context, the concept of fluid antenna systems (FASs) appears to be a potent solution, where there is no such restriction. In this paper, we investigate the average level crossing rate (LCR) of such FASs. Specifically, we derive closed-form analytical expressions of the LCR of such systems and extensive Monte-Carlo simulations validate the proposed analytical framework. Moreover, we also demonstrate that under certain conditions, the LCR obtained coincides with that of a conventional selection combining-based receiver. Finally, the numerical results also provide insights regarding the selection of appropriate parameters that enhance the system performance.
\end{abstract}

\begin{IEEEkeywords}
Fluid antenna systems, spatial correlation, level crossing rate, selection combining, diversity.
\end{IEEEkeywords}

\section{Introduction}
Multiple-input multiple-output (MIMO) can be considered as one of the most popular wireless technologies in recent years. The concepts of diversity and multiplexing gain form the basis of MIMO systems, which led to its extraordinary performance for wireless communication links. However, there must be a minimum distance of $\frac{\lambda}{2}$ between the antennas in MIMO systems, where $\lambda$ is the transmission wavelength \cite{goldsmith2005wireless}. This limits the integration of MIMO systems inside mobile devices such as tablets and mobile phones, where the physical space is very limited.

To overcome the aforementioned limitation of MIMO, the novel concept of fluid antenna system (FAS) was proposed in \cite{fluid1}. FAS is essentially a single antenna system with $N$ fixed locations (referred as ``port'') distributed over a given space. The idea of FAS is originally motivated by the increasing trend of using ionized solutions or liquid metals for antennas \cite{lant1,lant2,lant3}. The most interesting aspect of FAS is that an antenna element is no longer kept fixed at a particular location, but it can switch to a relatively more favorable location inside the boundaries, if required. Apparently, the objective of FAS resembles that of traditional transmit antenna selection (TAS) systems \cite{aselec}, where multiple antennas are deployed at different locations and the antenna with the strongest signal is selected. However, unlike TAS systems, the single antenna element in FAS can change position among the predetermined ports. In this way, an FAS exploits the phenomenon of spatial diversity and the received signal from the port with the strongest channel condition is selected. Furthermore, there is no limitation of maintaining a minimum inter-port distance and as a result, the space making up the FAS may be small with large number of ports. The work in \cite{fluid1} evaluates the theoretical performance of such systems in terms of outage probability. It is shown that even with a small space and a practically feasible number of ports, FAS can significantly outperform conventional maximum ratio combining-based systems. The work in \cite{fluid2} investigates the second order statistics of a FAS-based receiver. In particular, the work evaluates its performance in terms of the level crossing rate (LCR), average fade duration (AFD), and also ergodic capacity.

As stated above, FAS does not have any constraint on the inter-port distance. This, unlike in conventional MIMO systems, makes the aspect of spatial correlation a crucial factor in characterizing its performance limits. However, the work in \cite{fluid2} does not make this consideration. Motivated by this, in this work, we investigate the FAS second order statistics by taking the spatial correlation into account. Specifically, we derive closed-form analytical expressions of the LCR as a function of both the number of ports and the associated spatial correlation. We demonstrate that in certain scenarios, the FAS LCR coincides with that of a conventional selection combining (SC)-based system with independent and identically distributed (i.i.d.) channels. To the best of our knowledge, this is the first work that presents a complete analytical framework to characterize the LCR for a FAS, by taking into account its practical constraints and limitations.

\section{System model}
\begin{figure}[!t]
 \centering\includegraphics[width=0.95\linewidth]{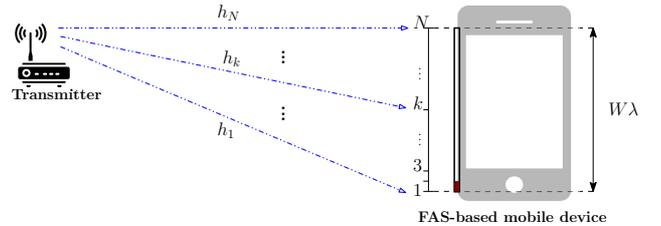}
\caption{The considered topology consisting of a single antenna transmitter and a FAS-based mobile device.}
\label{fig:model}
\end{figure}

We consider a simple point-to-point topology, where we have a single antenna transmitter and a $N$-port FAS-based mobile receiver, as depicted in Fig. \ref{fig:model}. As shown in the figure, a typical FAS-based receiver is essentially a single antenna system with a single radio-frequency (RF) chain. The $N$ ports are evenly distributed in a linear space of $W\lambda$, where $\lambda$ is the transmission wavelength and the antenna can switch locations instantly among the ports \cite{fluid1}. Fig. \ref{fig:model} illustrates that the first port is the reference and the liquid antenna is always switched to the port with best channel conditions. The channels at these ports are characterized as \cite{fluid1}
\begin{align}
\begin{cases}
\!h_{1}=\sigma x_{0} +j\sigma y_{0}\\
\!h_{k}=\sigma \left ({\sqrt {1-\mu _{k}^{2}}x_{k}+\mu _{k} x_{0}}\right) \\
\quad\:\: +j\sigma \left ({\sqrt {1-\mu _{k}^{2}}y_{k}+\mu _{k} y_{0}}\right) \:\: \text {for }k=2, {\cdots },N, \end{cases}
\end{align}
where $x_0,\cdots,x_N,y_0,\cdots,y_N$ are independent Gaussian random variables with zero mean and variance $\frac{1}{2}$, and $\mu_k$ $\forall$ $k$ are the parameters that control the spatial correlation between the channels. Accordingly, we have $\mathbb{E}[|h_k|^2]=\sigma^2$ $\forall$ $k$ and \cite{fluid1}
\begin{equation}  \label{bess}
\mu_k=J_0 \left( \frac{2\pi (k-1)}{N-1}W \right), \qquad \text {for }k=2, {\dots },N,
\end{equation}
where $J_0(\cdot)$ is the zero-order Bessel function of the first kind. Note that due to the oscillatory nature of $J_0(\cdot)$, for a given $N$, $\mu_k$ $\forall$ $k$ is not a monotonically decreasing function of $W$. It may happen that, multiple values of $W$ result in identical $\mu_k$. Hence, to obtain a reasonable range of $W,$ we consider $J_0(\cdot)$ till the point where it reaches the first zero. Accordingly, irrespective of $N$, we obtain $W \in [0,0.38]$. The FAS always selects the port with the strongest channel condition, i.e.
\begin{equation}
|h_{\mathrm{ FAS}}|=\max \{|h_{1}|,|h_{2}|, {\dots },|h_{N}|\}.
\end{equation}
As the ports in an FAS are very close to each other, the aspect of spatial correlation plays a crucial role in determining which of the $N$ ports is selected. Accordingly, the joint probability distribution function (PDF) of $|h_1|,|h_2|,\cdots,|h_N|$ is \cite{fluid1}
\begin{align} \label{jpdf1}
&p_{|h_{1}|,\cdots,|h_{N}|}(x_{1},\dots,x_{N}) \nonumber \\
&\quad =\prod _{\substack{k=1\\ (\mu _{1}\triangleq 0)}}^{N}\!\!\frac {2x_{k}}{\sigma ^{2}(1-\mu _{k}^{2})}e^{-\frac {x_{k}^{2}+\mu _{k}^{2}x_{1}^{2}}{\sigma ^{2}(1-\mu _{k}^{2})}}I_{0}\left ({\frac {2~\mu _{k}x_{1}x_{k}}{\sigma ^{2}(1-\mu _{k}^{2})}}\right)\!,
\end{align}
for $x_{1}, {\dots },x_{N}\ge 0$, where $I_0(\cdot)$ is the zero-order modified Bessel function of the first kind. It is important to note that the mutual coupling does not affect an FAS, as only one antenna element is activated at each time. Hence, \eqref{jpdf1} is not a conventional $N$-variate random variable, but a product of $N$ bi-variate random variables.

\section{Level Crossing Rate for FAS}  \label{afda}

In this section, we characterize the LCR of an FAS, which is an important parameter in characterizing the dynamics of any random process. It facilitates to evaluate the impact of the time-varying channel on the FAS performance. The LCR enables to estimate the statistics of error occurrence in signal detection. The LCR of a random process $r$ at threshold $r_{\rm th}$ essentially gives the number of times per unit duration that $r$ crosses $r_{\rm th}$ in the negative (or positive) direction \cite{rice}. Mathematically it defined as
\begin{equation}    \label{lcrdef}
L (r_{\rm th})=\int_0^{\infty}\dot{r}p_{\dot{R}R}(\dot{r},r_{\rm th})d\dot{r},
\end{equation}
where $\dot{r}$ is the time derivative of $r$ and $p_{\dot{R}R}(\dot{r},r)$ is the joint PDF of $r(t)$ and $\dot{r}(t)$ in an arbitrary instant $t$. For an isotropic scattering scenario, the time derivative of the signal envelope is Gaussian distributed with zero mean, irrespective of the fading distribution \cite{stuber}. As we aim to analyze the LCR of a FAS, we propose the following theorem in this direction. 
\begin{theorem}  \label{theo1}
The LCR for a $N$-port FAS is given by \eqref{lcrp}.
\end{theorem}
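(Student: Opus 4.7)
The plan is to apply the Rice formula \eqref{lcrdef} to the max-of-envelopes process by decomposing each upward crossing of $|h_{\rm FAS}|$ according to which port is momentarily the unique maximum, and to exploit the factorisation in \eqref{jpdf1} which shows that, conditional on $|h_1|$, the envelopes $|h_2|,\ldots,|h_N|$ are independent Rician with non-centrality $\mu_k|h_1|$ and scale parameter $\sigma^{2}(1-\mu_k^{2})/2$.

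The first step is to observe that $|h_{\rm FAS}|$ upcrosses $r_{\rm th}$ at an instant $t$ if and only if some port $k$ has $|h_k(t)|=r_{\rm th}$ with $\dot{|h_k(t)|}>0$ while every other port satisfies $|h_j(t)|\le r_{\rm th}$. Applying \eqref{lcrdef} componentwise and summing gives
\begin{equation*}
L(r_{\rm th})=\sum_{k=1}^{N}\int_0^\infty \dot r\, p_{|h_k|,\dot{|h_k|}}(r_{\rm th},\dot r)\, \Pr\bigl[|h_j|\le r_{\rm th},\,\forall j\ne k \mid |h_k|=r_{\rm th}\bigr]\,d\dot r.
\end{equation*}
Invoking the isotropic-scattering result of \cite{stuber}, each $\dot{|h_k|}$ is zero-mean Gaussian and independent of $|h_k|$; this decouples the velocity integral, producing the classical Rice factor times the Rayleigh marginal $p_{|h_k|}(r_{\rm th})$, multiplied by the probability that the remaining ports stay below $r_{\rm th}$.

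To evaluate that probability I would use the chain-rule form of \eqref{jpdf1}, namely $p_{|h_1|}(x_1)\prod_{k=2}^N p_{|h_k|\mid |h_1|}(x_k\mid x_1)$. For $k=1$, conditioning directly on $|h_1|=r_{\rm th}$ makes $\{|h_j|\}_{j\ge 2}$ conditionally independent Rice, so the required probability factorises into a product of Marcum $Q$-function complements, one per port, with arguments that scale as $1/\sqrt{1-\mu_k^{2}}$. For $k\ge 2$, Bayes' rule gives the conditional density of $|h_1|$ given $|h_k|=r_{\rm th}$ by dividing the corresponding bi-variate factor in \eqref{jpdf1} by the Rayleigh marginal of $|h_k|$; conditioning further on $|h_1|=x_1\in[0,r_{\rm th}]$ once more renders the remaining $N-2$ envelopes conditionally independent Rice, which leaves a single one-dimensional integral in $x_1$ whose integrand is this conditional prior times a product of Marcum $Q$-function complements.

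Summing the $k=1$ term and the $N-1$ structurally identical $k\ge 2$ terms produces \eqref{lcrp}. The main obstacle is the one-dimensional $x_1$-integral arising for $k\ge 2$: its integrand couples a Rayleigh factor, a Rician likelihood evaluated at $r_{\rm th}$, modified Bessel factors and Marcum $Q$ tails, so it does not reduce to elementary functions and will have to be retained as an integral representation inside the stated formula. A secondary technical point is justifying that $\dot{|h_k|}$ is independent not only of $|h_k|$ but of the \emph{other} envelopes as well; this is compatible with isotropic scattering, where the envelope derivative at time $t$ is uncorrelated with the in-phase/quadrature components that drive all the $|h_j|$ at the same instant.
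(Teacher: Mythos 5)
Your proposal is correct and follows essentially the same route as the paper's proof: the decomposition of the crossing rate over the port attaining the maximum (which is precisely what the cited joint PDF formula for the maximum encodes), the decoupling of the zero-mean Gaussian envelope derivative giving the factor $\sqrt{\pi/2}\,\sigma f_D$, and the use of conditional independence given $|h_1|$ to reduce the inner integrals to products of Marcum $Q$-function complements, with the residual one-dimensional $x_1$-integral retained for the $i\ge 2$ terms exactly as in \eqref{lcrp}.
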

\begin{figure*} [t]
\begin{align}  \label{lcrp}
L(x_{\rm th})&=\frac{\sqrt{2\pi}x_{\rm th}f_D}{\sigma} \Biggl\{ e^{-\frac{x_{\rm th}^2}{\sigma ^2}} \prod _{k=2}^N \left[ 1-Q_1 \left( \sqrt{\frac{2\mu_k^2}{\sigma ^{2}(1-\mu _{k}^{2})}} x_{\rm th} , \sqrt{\frac{2}{\sigma ^{2}(1-\mu _{k}^{2})}} x_{\rm th} \right) \right] + \sum_{i=2}^N \frac {1}{(1-\mu _i^{2})}e^{-\frac {x_{\rm th}^{2}}{\sigma ^{2}(1-\mu _i^{2})}} \nonumber \\
& \quad \times \int_0^{x_{\rm th}} \frac{2x_1}{\sigma^2}e^{-\frac{x_1^2}{\sigma ^{2}(1-\mu _i^{2})}} I_{0}\left ({\frac {2~\mu _ix_{\rm th}x_1}{\sigma ^{2}(1-\mu _i^{2})}}\right) \prod _{\substack{k=2 \\ k \neq i}}^N \left[ 1-Q_1 \left( \sqrt{\frac{2\mu_k^2}{\sigma ^{2}(1-\mu _{k}^{2})}} x_1 , \sqrt{\frac{2}{\sigma ^{2}(1-\mu _{k}^{2})}} x_{\rm th} \right) \right]dx_1  \Biggr\}.
\end{align}
\hrule
\end{figure*}
\begin{proof}
See Appendix A.
\end{proof}
We observe from \eqref{lcrp} that the LCR is a function of the spatial correlation, the number of ports, the decision threshold, and the maximum Doppler frequency of the channel. This  LCR of an FAS is different from that of a conventional SC-based receiver, primarily because of the unique PDF of the channels at the $N$ ports (as it can be seen from \eqref{jpdf1}) and also the aspect of  the associated spatial correlation. For the sake of completeness, we consider the following two extreme cases: $\mu_k=0,1$ $\forall$ $k=1,\cdots,N$. The LCR $L(x_{\rm th})$ corresponding to $\mu_k=0$ $\forall$ $k$, i.e. for a no spatial correlation scenario, is given below.
\begin{corr}  \label{cor1}
For a scenario without spatial correlation, i.e. $\mu_k=0$ $\forall$ $k$, $L(x_{\rm th})$ is
\begin{equation}  \label{coin}
L(x_{\rm th}) = N\sqrt{2\pi} f_D \frac{x_{\rm th}}{\sigma}e^{-\frac{x_{\rm th}^2}{\sigma^2}} \left( 1-e^{-\frac{x_{\rm th}^2}{\sigma^2}} \right)^{N-1}.
\end{equation}
\end{corr}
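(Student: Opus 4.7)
The plan is to specialize the general LCR in \eqref{lcrp} by substituting $\mu_k=0$ for every $k$ and simplifying the two pieces inside the braces separately. Because both pieces share the same product structure involving Marcum Q-functions, the main tool needed is an identity that collapses each Q-factor at vanishing first argument.

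First, I would invoke $Q_1(0,b)=e^{-b^2/2}$, which follows immediately from $I_0(0)=1$ in the integral definition of the Marcum Q-function. Setting $\mu_k=0$ in \eqref{lcrp} reduces each factor in the product to $1-Q_1(0,\sqrt{2}\,x_{\rm th}/\sigma)=1-e^{-x_{\rm th}^2/\sigma^2}$. Hence the first term inside the braces collapses to $e^{-x_{\rm th}^2/\sigma^2}\bigl(1-e^{-x_{\rm th}^2/\sigma^2}\bigr)^{N-1}$.

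For the summation term, with $\mu_i=0$ we have $1-\mu_i^2=1$ and $I_0(0)=1$, so the integrand reduces to $\frac{2x_1}{\sigma^2}e^{-x_1^2/\sigma^2}$, while the product over $k\neq i$ contributes the constant $(1-e^{-x_{\rm th}^2/\sigma^2})^{N-2}$. The remaining single integral is the CDF of a squared Rayleigh envelope evaluated at $x_{\rm th}$, giving $1-e^{-x_{\rm th}^2/\sigma^2}$. Thus each of the $N-1$ summands equals $e^{-x_{\rm th}^2/\sigma^2}\bigl(1-e^{-x_{\rm th}^2/\sigma^2}\bigr)^{N-1}$, and adding this to the first term yields $N\,e^{-x_{\rm th}^2/\sigma^2}\bigl(1-e^{-x_{\rm th}^2/\sigma^2}\bigr)^{N-1}$ inside the braces. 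Multiplying by the prefactor $\sqrt{2\pi}\,x_{\rm th}f_D/\sigma$ produces \eqref{coin}.

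No step poses a genuine obstacle, since the corollary is a direct specialization rather than an independent derivation; the only nontrivial ingredient is the evaluation $Q_1(0,b)=e^{-b^2/2}$. As a sanity check, I would note that \eqref{coin} coincides with the textbook LCR for an $N$-branch selection-combining receiver over i.i.d.\ Rayleigh fading, which is exactly the limiting behavior the FAS model must recover when the spatial correlation between ports disappears.
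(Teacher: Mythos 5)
Your proposal is correct and follows exactly the paper's route: the paper proves this corollary by direct substitution of $\mu_k=0$ into Theorem \ref{theo1} together with the identity $Q_1(0,b)=e^{-b^2/2}$, which is precisely your argument, only with the bookkeeping of the first term and the $N-1$ integral summands written out explicitly (and your evaluation of each summand as $e^{-x_{\rm th}^2/\sigma^2}\bigl(1-e^{-x_{\rm th}^2/\sigma^2}\bigr)^{N-1}$ is right). The only cosmetic slip is calling the remaining integral the CDF of a ``squared Rayleigh envelope''; it is simply the Rayleigh CDF at $x_{\rm th}$, with the same value $1-e^{-x_{\rm th}^2/\sigma^2}$, so nothing in the derivation is affected.
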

\noindent The above corollary follows directly from Theorem \ref{theo1}, by replacing $\mu_k=0$ $\forall$ $k=1,\cdots,N$ and using
\begin{equation}
Q_1(0,b)=\int_b^{\infty}xe^{-\frac{x^2}{2}}dx=e^{-\frac{b^2}{2}} \quad \text{for} \quad b \geq 0.
\end{equation}
Without a spatial correlation, for a given set of system parameters, $L(x_{\rm th})$ obtained in \eqref{coin} coincides with the LCR of a conventional SC-based receiver with i.i.d. channels \cite[Eq. 18]{slctn}.
The case with $\mu_k\!=\!1$ $\forall $ $k$ essentially corresponds to the scenario where the $N$ ports are identical, i.e., there is no need of any switching of the liquid antenna among the ports.

\begin{corr}  \label{mu1}
For a scenario with $\mu_k\!=\!1$ $\forall $ $k$, we have
\begin{equation}
L(x_{\rm th})=\frac{\sqrt{2\pi}}{\sigma}f_Dx_{\rm th}e^{-\frac{x_{\rm th}^2}{\sigma^2}}.
\end{equation}
\end{corr}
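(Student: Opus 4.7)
The plan is to sidestep the formal substitution $\mu_k=1$ in \eqref{lcrp}, since the right-hand side there contains factors $\frac{1}{1-\mu_k^2}$ and arguments $\sqrt{\frac{2}{\sigma^2(1-\mu_k^2)}}\,x$ of the Marcum $Q_1$ that all blow up as $\mu_k\to 1$. Instead, I would return to the channel model (1) and exploit the fact that $\mu_k=1$ is the degenerate case in which the joint PDF \eqref{jpdf1} itself collapses onto a one-dimensional support, so Rice's classical Rayleigh-envelope formula can be applied directly.

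Concretely, the first step is to substitute $\mu_k=1$ for every $k=2,\dots,N$ into (1). The factor $\sqrt{1-\mu_k^2}$ multiplying the independent pair $(x_k,y_k)$ vanishes, while the factor $\mu_k$ multiplying $(x_0,y_0)$ becomes unity, so
\begin{equation*}
h_k=\sigma x_0+j\sigma y_0=h_1,\qquad k=2,\dots,N.
\end{equation*}
Thus all $N$ ports carry \emph{the same} complex Gaussian sample, and
\begin{equation*}
|h_{\rm FAS}|=\max\{|h_1|,\dots,|h_N|\}=|h_1|.
\end{equation*}

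The second step is to identify $|h_1|$ as a Rayleigh envelope with $\mathbb{E}[|h_1|^2]=\sigma^2$ and to invoke the standard Rice result for the LCR of an isotropic-scattering Rayleigh fading process. Plugging $R_{\rm rms}=\sigma$ and the threshold $x_{\rm th}$ into that formula yields exactly $L(x_{\rm th})=\sqrt{2\pi}\,f_D\,(x_{\rm th}/\sigma)\,e^{-x_{\rm th}^2/\sigma^2}$, which is the claim. One may optionally mention that this is physically consistent with the interpretation in the text: when all ports are perfectly correlated there is no diversity gain, and the FAS behaves exactly like a single Rayleigh port.

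The main obstacle, and the reason I would avoid it, is that deriving the same limit directly from \eqref{lcrp} requires analysing $Q_1(a_\mu,b_\mu)$ with both arguments diverging as $(1-\mu_k^2)^{-1/2}$ while their difference is controlled by $(x_{\rm th}-\mu_k x_1)$; this produces a step-function limit ($0$ if $x_1<x_{\rm th}$, $1$ if $x_1>x_{\rm th}$) together with a Dirac concentration in the integrand near $x_1=x_{\rm th}$, so matching constants becomes delicate. The degeneration argument above bypasses this entirely by observing that at $\mu_k=1$ the model \eqref{jpdf1} is no longer absolutely continuous in $\mathbb{R}^N$, so Theorem~\ref{theo1} must be reinterpreted rather than formally substituted — and the reinterpretation is precisely a single Rayleigh channel, for which Rice's formula gives the answer.
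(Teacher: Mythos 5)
Your proposal is correct and follows essentially the same route as the paper: both observe that $\mu_k=1$ collapses all ports onto the single Rayleigh envelope $|h_1|$, and then apply the standard Rayleigh LCR result (the paper simply writes out Rice's one-line computation explicitly, factorizing $p_{|\dot h|,|h|}(\dot x,x)=p_{|\dot h|}(\dot x)\,p_{|h|}(x)$ and using $\int_0^\infty \dot x\, p_{|\dot h|}(\dot x)\,d\dot x=\sqrt{\pi/2}\,\sigma f_D$, rather than citing the classical formula). Your remark that one should not formally substitute $\mu_k=1$ into \eqref{lcrp} because the joint PDF degenerates is a sound observation and consistent with how the paper handles the case.
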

\begin{proof}
See Appendix B.
\end{proof}
\noindent It is interesting to note from the above corollary, that in case of identical channels, the FAS LCR is independent of $N$. 
Nevertheless, the analytical expression of $L(x_{\rm th})$ derived in Theorem \ref{theo1}, being too involved, does not provide any insightful analysis. Hence, we consider a simple case of $N=2$.

\begin{corr}  \label{corr2}
For a two-port FAS, the LCR is given by
\begin{align}  \label{n2f}
L(x_{\rm th})&=\frac{2\sqrt{2\pi} f_Dx_{\rm th}}{\sigma ^{3}(1-\mu^{2})} e^{\left(\!\!-\frac {x_{\rm th}^2}{\sigma ^{2}(1-\mu^2)} \right)} \nonumber \\
&\!\!\!\!\!\!\!\!\!\!\!\!\!\!\!\!\!\! \times \sum_{k=0}^{\infty} \frac{\left(\mu x_{\rm th}\right)^{2k}}{(k!)^2 \left( \sigma ^{2}(1-\mu^2) \right)^{k-1}}\gamma \left( k+1, \frac{x_{\rm th}^2}{\sigma ^{2}(1-\mu^2)} \right).
\end{align}
\end{corr}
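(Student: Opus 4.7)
The plan is to specialize the general LCR expression in \eqref{lcrp} to $N=2$ and then reduce the resulting Marcum $Q$-function term and Bessel-function integral to the common series form shown in \eqref{n2f}.

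First I would set $N=2$ in Theorem~\ref{theo1}. The product $\prod_{k=2}^{N}[\,\cdot\,]$ collapses to a single factor and, in the summation term, the inner product $\prod_{k=2,\,k\ne i}^{N}[\,\cdot\,]$ becomes empty (equal to~$1$). Writing $\mu\equiv\mu_{2}$, the bracket in \eqref{lcrp} reduces to the sum of two pieces,
\begin{align*}
T_{1}&=e^{-x_{\rm th}^{2}/\sigma^{2}}\Bigl[1-Q_{1}(a,b)\Bigr],\\
T_{2}&=\frac{e^{-x_{\rm th}^{2}/[\sigma^{2}(1-\mu^{2})]}}{1-\mu^{2}}\int_{0}^{x_{\rm th}}\frac{2x_{1}}{\sigma^{2}}\,e^{-\frac{x_{1}^{2}}{\sigma^{2}(1-\mu^{2})}}I_{0}\!\left(\frac{2\mu x_{\rm th}x_{1}}{\sigma^{2}(1-\mu^{2})}\right)dx_{1},
\end{align*}
with $a=\sqrt{2\mu^{2}/[\sigma^{2}(1-\mu^{2})]}\,x_{\rm th}$ and $b=\sqrt{2/[\sigma^{2}(1-\mu^{2})]}\,x_{\rm th}$.

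Next I would convert both pieces to the same series. For $T_{2}$, substitute the standard power-series representation $I_{0}(z)=\sum_{k\ge0}(z/2)^{2k}/(k!)^{2}$, interchange sum and integral (justified by uniform convergence on $[0,x_{\rm th}]$), and apply the substitution $u=x_{1}^{2}/[\sigma^{2}(1-\mu^{2})]$. Each integral then reduces to a lower incomplete gamma function $\gamma\bigl(k+1,x_{\rm th}^{2}/[\sigma^{2}(1-\mu^{2})]\bigr)$, and the $1/(1-\mu^{2})$ prefactor cancels a $(1-\mu^{2})$ that emerges from the change of variables, leaving
\[
T_{2}=e^{-\frac{x_{\rm th}^{2}}{\sigma^{2}(1-\mu^{2})}}\sum_{k=0}^{\infty}\frac{(\mu x_{\rm th})^{2k}}{(k!)^{2}[\sigma^{2}(1-\mu^{2})]^{k}}\,\gamma\!\left(k+1,\frac{x_{\rm th}^{2}}{\sigma^{2}(1-\mu^{2})}\right).
\]
For $T_{1}$ I would invoke the well-known Poisson/non-central chi-square series for the Marcum function,
\[
1-Q_{1}(a,b)=e^{-a^{2}/2}\sum_{k=0}^{\infty}\frac{(a^{2}/2)^{k}}{(k!)^{2}}\,\gamma(k+1,b^{2}/2),
\]
obtained by expressing $Q_{1}(a,b)$ as a Poisson mixture of central $\chi^{2}_{2k+2}$ tail probabilities and using $\Gamma(k+1,x)=k!-\gamma(k+1,x)$. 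Plugging in $a^{2}/2$ and $b^{2}/2$ and merging the exponent $-x_{\rm th}^{2}/\sigma^{2}$ with $-\mu^{2}x_{\rm th}^{2}/[\sigma^{2}(1-\mu^{2})]$ yields exactly $T_{1}=T_{2}$.

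Finally I would add $T_{1}+T_{2}=2T_{2}$, multiply by the common prefactor $\sqrt{2\pi}\,x_{\rm th}f_{D}/\sigma$, and absorb the resulting $1/\sigma$ and the shift $[\sigma^{2}(1-\mu^{2})]^{k}\to[\sigma^{2}(1-\mu^{2})]^{k-1}$ into $\sigma^{3}(1-\mu^{2})$ in the outer factor, recovering \eqref{n2f}. The main obstacle is recalling (or re-deriving) the Poisson-type series expansion of $1-Q_{1}(a,b)$ in terms of lower incomplete gamma functions; once that identity is in hand, everything else is bookkeeping of prefactors and exponentials. An alternative route that avoids this identity is to recognize $T_{1}$ directly as the integral of the same bivariate Rice-type density evaluated in $T_{2}$ over a complementary region, which makes the equality $T_{1}=T_{2}$ self-evident; I would use whichever route yields the cleanest exposition.
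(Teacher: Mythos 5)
Your proof is correct, and it reaches \eqref{n2f} by a route that differs from the paper's in one key step. The paper does not keep the Marcum $Q$-function term of \eqref{lcrp} at all: it effectively reverts to the pre-Marcum form \eqref{n21}, writing the $N=2$ LCR as $\sqrt{\pi/2}\,\sigma f_D$ times the sum of two integrals of the bivariate PDF \eqref{pdf2var} (one with $x_1=x_{\rm th}$, one with $x_2=x_{\rm th}$), and then uses the symmetry of that PDF to collapse them into a single integral with a factor of $2$, before expanding $I_0$ as a power series and integrating termwise into lower incomplete gamma functions. You instead specialize the final expression \eqref{lcrp} directly, evaluate the integral term $T_2$ exactly as the paper does (the $I_0$ series plus the substitution $u=x_1^2/[\sigma^2(1-\mu^2)]$, with the prefactor bookkeeping checking out), and establish $T_1=T_2$ via the Poisson-mixture series $1-Q_1(a,b)=e^{-a^2/2}\sum_{k\ge 0}\frac{(a^2/2)^k}{(k!)^2}\gamma(k+1,b^2/2)$, together with the exponent merge $\frac{1}{\sigma^2}+\frac{\mu^2}{\sigma^2(1-\mu^2)}=\frac{1}{\sigma^2(1-\mu^2)}$ — all of which is valid. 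Your approach buys a self-contained verification that the Marcum term in Theorem~\ref{theo1} carries the same incomplete-gamma series as the integral term, at the cost of invoking one extra (standard) identity; the paper's symmetry argument is shorter and avoids the Marcum expansion entirely, and is essentially the ``alternative route'' you mention in your last sentence.
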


\begin{proof}
See Appendix C.
\end{proof}

\noindent The above corollary demonstrates the effect of parameters such as $x_{\rm th}, \mu,$ and $\sigma$ on the LCR. For example, we observe from \eqref{n2f} that $L(x_{\rm th})$ is the product of an unimodal function and an increasing function with respect to $x_{\rm th}$. This implies that $L(x_{\rm th})$ is also unimodal, i.e. $L(x_{\rm th})$ initially increases with $x_{\rm th}$, but it starts decreasing after a certain point.

\section{Numerical Results}
We now validate our theoretical analysis with extensive Monte-Carlo simulations. Without any loss of generality, we consider unit power channels, i.e. $\mathbb{E}[|h_k|^2]=\sigma^2=1$ $\forall$ $k=1,\cdots,N$, where $N$ is the number of ports in the FAS and a carrier frequency of $900$ MHz.

\begin{figure}[!t]
 \centering\includegraphics[width=0.95\linewidth]{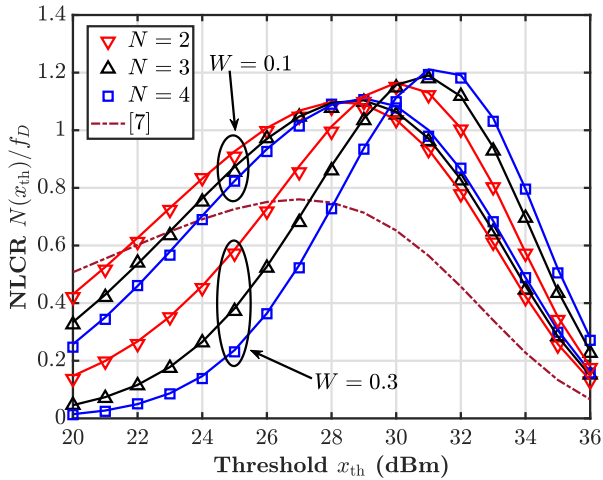}
\caption{Verification of proposed analysis via Monte-Carlo simulations.}
\label{fig:fig1}
\end{figure}

Fig. \ref{fig:fig1} demonstrates the variation of the normalized LCR (NLCR) $L(x_{\rm th})/f_D$ versus the decision threshold $x_{\rm th}$ for two scenarios with $W=0.1$ and $0.3$, respectively. In this figure, we have considered a $N$-port FAS, with $N=2,3,$ and $4$, respectively. Note that the values considered are solely for illustration. We observe that the theoretical results (lines) match very closely with the simulation results (markers); this verifies our proposed analytical framework. The figure supports our claim that LCR depends on both $x_{\rm th}$ and $\mu$. It can be further noted that the LCR increases with increase of $x_{\rm th}$ until it reaches its maximum and then it decreases. Moreover, this particular value of $x_{\rm th}$ depends on the value of $W$, which corroborates the claims made in \cite{blet} regarding the effect of optimum threshold selection. Furthermore, we observe that, the choice of $W$ also has an impact on the NLCR. Finally, the closed-form expression for NLCR derived in \cite{fluid2} significantly deviates from the simulation results.

\begin{figure}[!t]
\centering\includegraphics[width=0.95\linewidth]{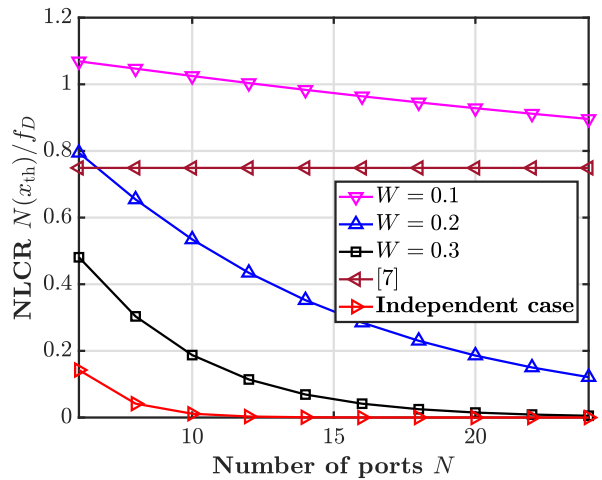}
\caption{Impact of spatial correlation on NLCR; $x_{\rm th}=28$ dBm.}
\label{fig:fig2}
\end{figure}

Fig. \ref{fig:fig2} illustrates the effect of $N$ on the NLCR performance. For an identical $N$, a marginal increase of $W$ leads to a significant improvement in performance; for example, observe the performance gap at $N=16$ between $W=0.1,0.2,$ and $0.3,$ respectively. It is worth to note, that the best performance is observed when the channels are independent at the ports, i.e. $\mu_k=0$ $\forall$ $k=1,\cdots,N$. Finally, the figure demonstrates that irrespectively of the choice of $W$, an FAS asymptotically attains the optimal performance in terms of NLCR, as $N \rightarrow \infty$; greater the value of $W$, faster is the convergence. However, from \eqref{bess}, we know that $W$ cannot be increased arbitrarily due to the oscillatory nature of $J_0(\cdot)$; in this way, the spatial correlation affects the FAS performance. Furthermore, we also observe that the change of $W$ or $N$ does not affect the NLCR, as in \cite{fluid2}.

Fig. \ref{fig:fig3} depicts the variation of NLCR with $N$ for multiple values of $x_{\rm th}$. We observe that for both the cases, i.e. \eqref{lcrp} and \cite{fluid2}, a lower threshold results in a lower NLCR. Furthermore, we observe that, as also seen in Fig. \ref{fig:fig2}, \cite{fluid2} is invariant to $N$ (from the dashed lines). On the contrary, NLCR as derived in \eqref{lcrp} is significantly affected with increasing $N$; the NLCR decreases with $N$. This demonstrates the key benefit of an FAS, where it is advantageous to have higher $N$ without any inter-port distance constraint.

\section{Conclusion}
In this paper, motivated by the practical constraints of a FAS, we proposed a novel and general analytical framework for the exact evaluation of an important second order statistical parameter of FAS, namely the LCR. In particular, by considering the effect of the time-varying nature of fading channels, we investigated the aspect of spatial correlation in characterizing this performance metric. Closed form expressions for the LCR were analytically derived and it was demonstrated that in certain scenarios, they coincide with the LCR of an SC-based receiver with i.i.d. channels. Finally, we validate our proposed framework by extensive Monte-Carlo simulations. Based on the framework presented in this paper, an immediate extension of this work is to investigate the second order statistics of a multiple FAS-based topology with a heterogeneous geometry.


\section*{Appendix}

\subsection{Proof of Theorem \ref{theo1}}  \label{app1}

\begin{figure}[!t]
 \centering\includegraphics[width=0.95\linewidth]{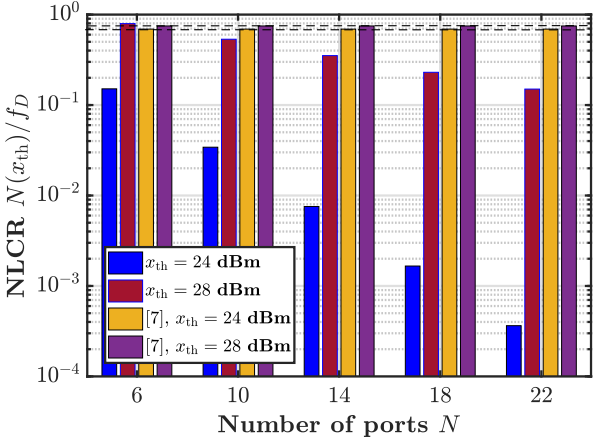}
\caption{Impact of spatial correlation on NLCR; $W=0.2$.}
\label{fig:fig3}
\end{figure}

In this context, the $N$-variate joint PDF $p_{|\dot{h}|,|h|}(\dot{x},x)$ is given by \cite[8.42]{jpdf}
\begin{align}  \label{alou}
p_{|\dot{h}|,|h|}(\dot{x},x)&=\sum_{i=1}^N p_{|\dot{h_i}|}(\dot{x}) \nonumber \\[-0.2em]
& \!\! \times \underbrace{\int_0^x\!\!\cdots\!\!\int_0^x}_{(N-1)-{\rm fold}} \!\!\!\! p_{|h_{1}|,\cdots,|h_{N}|}(x_1,\dots,x_i=x,\cdots,x_N) \nonumber \\[-0.2em]
& \times \underbrace{dx_1 \cdots dx_k\cdots dx_N}_{\substack{(N-1)-{\rm fold} \\ k \neq i}},
\end{align}
where $|\dot{h}_i|$ is the time derivative of the signal envelope at the $i$-th port. Hence, from \eqref{lcrdef}, we obtain the LCR as
\begin{align}  \label{step1}
L(x_{\rm th})&=\int_0^{\infty}\dot{x}p_{|\dot{h}|,|h|}(\dot{x},x)d\dot{x} \nonumber\\
&=\int_0^{\infty}\!\!\!\!\dot{x} \sum_{i=1}^N p_{|\dot{h_i}|}(\dot{x})\!\! \nonumber \\[-0.2em]
& \times \underbrace{\int_0^{x_{\rm th}}\!\!\!\!\cdots\!\!\int_0^{x_{\rm th}}}_{(N-1)-{\rm fold}}\!\!\!\! \!\!p_{|h_{1}|,\cdots,|h_{N}|}(x_1,\dots,x_i=x_{\rm th},\cdots,x_N) \nonumber \\[-0.2em]
& \times \underbrace{dx_1 \cdots dx_k\cdots dx_N}_{\substack{(N-1)-{\rm fold} \\ k \neq i}}d\dot{x}.
\end{align}
Furthermore, in case of an identically distributed Rayleigh fading scenario, $p_{|\dot{h_i}|}(\dot{x})$ $\forall$ $i=1,\cdots,N,$ follows a zero mean Gaussian PDF \cite{stuber} with variance $\sigma_{\dot{X}}^2=\pi^2\sigma^2f_D^2$, where $f_D$ is the maximum Doppler frequency. As a result, we obtain
\begin{equation}  \label{pdot}
\int_0^{\infty}\!\!\dot{x}p_{|\dot{h_i}|}(\dot{x})d\dot{x}=\frac{\sigma_{\dot{X}}}{\sqrt{2\pi}}=\sqrt{\frac{\pi}{2}}\sigma f_D, \:\: \forall i=1,\cdots,N.
\end{equation}
By combining \eqref{step1} and \eqref{pdot}, we have
\begin{align}  \label{step2}
&L(x_{\rm th})=\sqrt{\frac{\pi}{2}}\sigma f_D \nonumber \\[-0.2em]
& \times  \sum_{i=1}^N \underbrace{\int_0^{x_{\rm th}}\!\!\cdots\int_0^{x_{\rm th}}}_{(N-1)-{\rm fold}}p_{|h_{1}|,\cdots,|h_{N}|}(x_1,\dots,x_i=x_{\rm th},\cdots,x_N) \nonumber \\[-0.2em]
& \times  \underbrace{dx_1 \cdots dx_k\cdots dx_N}_{\substack{(N-1)-{\rm fold} \\ k \neq i}}.
\end{align}
The summation term in \eqref{step2} is alternatively written as
\begin{align}  \label{altr}
&\sum_{i=1}^N \underbrace{\int_0^{x_{\rm th}}\!\!\cdots\int_0^{x_{\rm th}}}_{(N-1)-{\rm fold}}p_{|h_{1}|,\cdots,|h_{N}|}(x_1,\dots,x_i=x_{\rm th},\cdots,x_N) \nonumber \\[-0.2em]
& \quad \times \underbrace{dx_1 \cdots dx_k\cdots dx_N}_{\substack{(N-1)-{\rm fold} \\ k \neq i}} \nonumber \\[-0.2em]
&= \underbrace{\int_0^{x_{\rm th}}\!\!\cdots\int_0^{x_{\rm th}}}_{(N-1)-{\rm fold}}p_{|h_{1}|,\cdots,|h_{N}|}(x_1=x_{\rm th},\cdots,x_N) \underbrace{dx_2 \cdots dx_N}_{(N-1)-{\rm fold}} \nonumber \\[-0.2em]
& \quad + \sum_{i=2}^N \underbrace{\int_0^{x_{\rm th}}\!\!\cdots\!\!\int_0^{x_{\rm th}}}_{(N-1)-{\rm fold}}\!\!p_{|h_{1}|,\cdots,|h_{N}|}(x_1,\dots,x_i=x_{\rm th},\cdots,x_N) \nonumber \\[-0.2em]
& \quad \times \underbrace{dx_1 \cdots dx_k\cdots dx_N}_{\substack{(N-1)-{\rm fold} \\ k \neq i}}.
\end{align}
The first term of \eqref{altr} is evaluated as
\begin{align}  \label{1st}
&\underbrace{\int_0^{x_{\rm th}}\cdots\int_0^{x_{\rm th}}}_{(N-1)-{\rm fold}}p_{|h_{1}|,\cdots,|h_{N}|}(x_1=x_{\rm th},\cdots,x_N) \underbrace{dx_2 \cdots dx_N}_{(N-1)-{\rm fold}} \nonumber \\[-0.2em]
&\overset{(a)}{=}\frac{2x_{\rm th}}{\sigma ^2}e^{-\frac{x_{\rm th}^2}{\sigma ^2}}\int_0^{x_{\rm th}}\cdots\int_0^{x_{\rm th}}\prod _{k=2}^N
\frac {2x_{k}}{\sigma ^{2}(1-\mu _{k}^{2})}e^{-\frac {x_{k}^{2}+\mu _{k}^{2}x_{\rm th}^{2}}{\sigma ^{2}(1-\mu _{k}^{2})}} \nonumber \\[-0.2em]
& \quad \times I_{0}\left ({\frac {2~\mu _{k}x_{\rm th}x_{k}}{\sigma ^{2}(1-\mu _{k}^{2})}}\right) dx_2 \cdots dx_N \nonumber \\[-0.2em]
& =\frac{2x_{\rm th}}{\sigma ^2}e^{-\frac{x_{\rm th}^2}{\sigma ^2}} \prod _{k=2}^N \int_0^{x_{\rm th}} \frac {2x_{k}}{\sigma ^{2}(1-\mu _{k}^{2})}e^{-\frac {x_{k}^{2}+\mu _{k}^{2}x_{\rm th}^{2}}{\sigma ^{2}(1-\mu _{k}^{2})}} \nonumber \\[-0.2em]
& \quad \times I_{0}\left ({\frac {2~\mu _{k}x_{\rm th}x_{k}}{\sigma ^{2}(1-\mu _{k}^{2})}}\right) dx_k \nonumber \\
&\overset{(b)}{=}\frac{2x_{\rm th}}{\sigma ^2}e^{-\frac{x_{\rm th}^2}{\sigma ^2}} \prod _{k=2}^N \left[ 1-Q_1 \left( \sqrt{\frac{2\mu_k^2}{\sigma ^{2}(1-\mu _{k}^{2})}} x_{\rm th} , \right.\right. \nonumber \\[-0.2em]
& \quad \left.\left. \sqrt{\frac{2}{\sigma ^{2}(1-\mu _{k}^{2})}} x_{\rm th} \right) \right],
\end{align}
where $(a)$ follows from \eqref{jpdf1}, $Q_1(\cdot,\cdot)$ is the first-order Marcum $Q$-function, and $(b)$ follows from \cite[Eq. 10]{marcum}. Hereafter, the second term of \eqref{altr} is expanded as \eqref{2nd}, where $(c)$ is based on \cite[Eq. 10]{marcum} and the fact that the joint PDF stated in \eqref{jpdf1} is not a regular multivariate Rayleigh PDF. This distribution is a product of $N$ pairs of bi-variate Rayleigh PDFs, with the first port of the FAS being the reference point for all the remaining $N-1$ ports. By combining \eqref{step2}, \eqref{altr}, \eqref{1st}, and \eqref{2nd}, we obtain \eqref{lcrp}, which completes the proof.
\begin{figure*}[t]
\begin{align}  \label{2nd}
& \sum_{i=2}^N \underbrace{\int_0^{x_{\rm th}}\!\!\cdots\int_0^{x_{\rm th}}}_{(N-1)-{\rm fold}}p_{|h_{1}|,\cdots,|h_{N}|}(x_1,\dots,x_i=x_{\rm th},\cdots,x_N) \underbrace{dx_1 \cdots dx_k\cdots dx_N}_{\substack{(N-1)-{\rm fold} \\ k \neq i}} \nonumber \\[-0.2em]
&= \sum_{i=2}^N \int_0^{x_{\rm th}}\!\!\cdots\!\!\int_0^{x_{\rm th}} \!\!\frac {2x_{\rm th}}{\sigma ^{2}(1-\mu _i^{2})}e^{-\frac {x_{\rm th}^{2}+\mu _i^{2}x_1^{2}}{\sigma ^{2}(1-\mu _i^{2})}}I_{0}\left ({\frac {2~\mu _ix_{\rm th}x_1}{\sigma ^{2}(1-\mu _i^{2})}}\right)\!\! \prod _{\substack{k=1 \\ k \neq i}}^N  \frac {2x_{k}}{\sigma ^{2}(1-\mu _{k}^{2})}e^{-\frac {x_{k}^{2}+\mu _{k}^{2}x_1^{2}}{\sigma ^{2}(1-\mu _{k}^{2})}}I_{0}\left ({\frac {2~\mu _{k}x_kx_1}{\sigma ^{2}(1-\mu _{k}^{2})}}\right) dx_1 \cdots dx_k\cdots dx_N  \nonumber \\[-0.2em]
&\overset{(c)}{=} \sum_{i=2}^N \frac {2x_{\rm th}}{\sigma ^{2}(1-\mu _i^{2})}e^{-\frac {x_{\rm th}^{2}}{\sigma ^{2}(1-\mu _i^{2})}}\!\! \int_0^{x_{\rm th}} \frac{2x_1}{\sigma^2}e^{-\frac{x_1^2}{\sigma ^{2}(1-\mu _i^{2})}} I_{0}\left (\!{\frac {2~\mu _ix_{\rm th}x_1}{\sigma ^{2}(1-\mu _i^{2})}}\!\right)\!\! \prod _{\substack{k=2 \\ k \neq i}}^N \left[\! 1-Q_1 \left( \sqrt{\frac{2\mu_k^2}{\sigma ^{2}(1-\mu _{k}^{2})}} x_1 , \sqrt{\frac{2}{\sigma ^{2}(1-\mu _{k}^{2})}} x_{\rm th} \right)\! \right]dx_1.
\end{align}
\vspace{-1mm}
\hrule
\end{figure*}
\vspace{-1mm}
\subsection{Proof of Corollary \ref{mu1}}
\label{appmu1}
The case of $\mu_k=1$ $\forall$ $k$ implies identical channels at all the ports. As a result, we obtain
\begin{align}
L(x_{\rm th})&=\int_0^{\infty} \!\! \dot{x}p_{|\dot{h}|,|h|}(\dot{x},x_{\rm th})d\dot{x}\overset{(a)}{=}p_{|h|}(x_{\rm th})\!\! \int_0^{\infty} \dot{x}p_{|\dot{h}|}(\dot{x})d\dot{x} \nonumber \\[-0.2em]
&\!\!\!\!\!\!\!\!\!\!\!\!=\frac{2x_{\rm th}}{\sigma^2}e^{-\frac{x_{\rm th}^2}{\sigma^2}}\!\!\int_0^{\infty}\!\!\dot{x}p_{|\dot{h_i}|}(\dot{x})d\dot{x}\overset{(b)}{=}\frac{\sqrt{2\pi}}{\sigma}f_Dx_{\rm th}e^{-\frac{x_{\rm th}^2}{\sigma^2}},
\end{align}
\noindent where $(a)$ follows from $p_{|\dot{h}|,|h|}(\dot{x},x)=p_{|\dot{h}|}(\dot{x})p_{|h|}(x)$ \cite[Eq. 2.97]{stuber} and $(b)$ follows from \eqref{pdot}. Hence, the proof. 


\subsection{Proof of Corollary \ref{corr2}}  \label{app2}

Since we are considering a two-port scenario, we take into account the joint PDF of the channel at these two ports. Hence, by replacing $N=2$ in \eqref{jpdf1}, the joint PDF becomes
\begin{align}  \label{pdf2var}
p_{|h_1|,|h_2|}(x_1,x_2)&=\frac {4x_1x_2}{\sigma ^{4}(1-\mu_2^{2})}e^{-\frac {x_1^2+x_2^2}{\sigma ^{2}(1-\mu_2^2)}} \nonumber \\[-0.2em]
& \!\!\!\!\!\!\times I_{0}\left ({\frac {2\mu_2 x_1x_2}{\sigma ^{2}(1-\mu_2^2)}}\right)\!,\quad \text{for} \quad x_1,x_2 \geq 0.
\end{align}
Thus, by replacing $N=2$ in \eqref{lcrp} and after some trivial algebraic manipulations, we obtain
\begin{align}  \label{n21}
L(x_{\rm th})&=\sqrt{\frac{\pi}{2}}\sigma f_D \left( \int_0^{x_{\rm th}} p_{|h_1|,|h_2|}(x_{\rm th},x_2)dx_2 \right. \nonumber \\[-0.2em]
& \quad \left. +\int_0^{x_{\rm th}} p_{|h_1|,|h_2|}(x_1,x_{\rm th})dx_1 \right) \nonumber \\[-0.2em]
&\overset{(a)}{=}\sqrt{2\pi}\sigma f_D \!\! \int_0^{x_{\rm th}} p_{|h_1|,|h_2|}(x_{\rm th},x_2)dx_2 \nonumber \\[-0.2em]
& = \frac{4\sqrt{2\pi}\sigma f_Dx_{\rm th}}{\sigma ^{4}(1-\mu^{2})} \exp \left(\!\!-\frac {x_{\rm th}^2}{\sigma ^{2}(1-\mu^2)} \right) \int_0^{x_{\rm th}}\!\! x_2 \nonumber \\[-0.2em]
& \quad \times \! \exp \left(\!\!-\frac {x_2^2}{\sigma ^{2}(1-\mu^2)}\!\! \right) \!\!I_{0}\left ( \!{\frac {2\mu x_{\rm th}x_2}{\sigma ^{2}(1-\mu^2)}}\!\right)\! dx_2,
\end{align}
where $(a)$ follows from \eqref{pdf2var}
and
the integral 
$\int_0^{x_{\rm th}}x_2 \! \exp \left(\!\!-\frac {x_2^2}{\sigma ^{2}(1-\mu^2)}\!\! \right) I_{0}\left ( \!{\frac {2\mu x_{\rm th}x_2}{\sigma ^{2}(1-\mu^2)}}\!\right)\! dx_2$
is obtained as
\begin{align}  \label{n22}
&\int_0^{x_{\rm th}} x_2 \exp \left(-\frac {x_2^2}{\sigma ^{2}(1-\mu^2)} \right) I_{0}\left ({\frac {2\mu x_{\rm th}x_2}{\sigma ^{2}(1-\mu^2)}}\right) dx_2 \nonumber \\[-0.2em]
&\overset{(b)}{=} \!\! \int_0^{x_{\rm th}} \!\! x_2 \exp \left(\!\!-\frac {x_2^2}{\sigma ^{2}(1-\mu^2)} \right) \!\!\sum_{k=0}^{\infty} \frac{1}{(k!)^2}\left(\frac {\mu x_{\rm th}x_2}{\sigma ^{2}(1-\mu^2)} \right)^{\!\! 2k} \!\! dx_2 \nonumber \\
&\overset{(c)}{=} \sum_{k=0}^{\infty} \frac{1}{(k!)^2} \left(\frac {\mu x_{\rm th}}{\sigma ^{2}(1-\mu^2)} \right)^{2k} \nonumber \\
& \quad \times \int_0^{x_{\rm th}} \exp \left(-\frac {x_2^2}{\sigma ^{2}(1-\mu^2)} \right) x_2^{2k+1} dx_2 \nonumber \\
&=\frac{1}{2}\! \sum_{k=0}^{\infty}\! \frac{\left(\mu x_{\rm th}\right)^{2k}}{(k!)^2 \left( \sigma ^{2}(1-\mu^2) \right)^{k-1}} \gamma \left(\! k+1, \frac{x_{\rm th}^2}{\sigma ^{2}(1-\mu^2)}\! \right).
\end{align}
Here $(b)$ follows from \cite[8.445]{grad} and by assuming $0<\mu<1$, $(c)$ follows from changing the order of summation and integration, and $\gamma(\cdot,\cdot)$ denotes the lower incomplete Gamma function. As a result, by substituting \eqref{n22} in \eqref{n21}, we finally obtain \eqref{n2f}.

\section*{Acknowledgment}

This work was co-funded by the European Regional Development Fund and the Republic of Cyprus through the Research and Innovation Foundation, under the project INFRASTRUCTURES/1216/0017 (IRIDA). It has also received funding from the European Research Council (ERC) under the European Union’s Horizon 2020 research and innovation programme (Grant agreement No. 819819).

\bibliographystyle{IEEEtran}
\bibliography{refs}

\end{document}